\spnewtheorem{definition}{Definition}{\bfseries}{\rmfamily}
\newcommand{\old}[1]{{}}
\renewcommand\bibsection%
    \pretocmd{\NAT@citexnum}{\@ifnum{\NAT@ctype>\z@}{\let\NAT@hyper@\relax}{}}{}{}
\theoremstyle{plain}
\newcommand{\defproblem}[3]{
	\vspace{1mm}
	\noindent\fbox{
		\begin{minipage}{0.96\textwidth}
			\begin{tabular*}{\textwidth}{@{\extracolsep{\fill}}lr} #1 \\ \end{tabular*}
			{\bf{Input:}} #2  \\
			{\bf{Question:}} #3
		\end{minipage}
	}
	\vspace{1mm}
}
\newcommand{\cecp}{\textsc{Colorful Edge Cover}\xspace}
\newcommand{\cvcp}{\textsc{Colorful Vertex Cover}\xspace}
\newcommand{\ckc}{\textsc{Fair $k$-center}\xspace}
\newcommand{\pvc}{\textsc{Partial Vertex Cover}\xspace}
\newcommand{\bmp}{\textsc{Budgeted Matching}\xspace}
\newcommand{\tmp}{\textsc{Tropical Matching}\xspace}
\newcommand{\I}{\mathcal{I}}
\newcommand{\iflong}[1]{}
\newcommand{\mcc}{\mathcal{C}}
\title{On Colorful Vertex and Edge Cover Problems\thanks{A preliminary version of this article appeared in the proceedings of the 47th International Workshop on Graph-Theoretic Concepts in Computer Science (WG) 2021 \cite{DBLP:conf/wg/BandyapadhyayBB21}.}}
\titlerunning{On Colorful Vertex and Edge Cover Problems}
\date{}
\author{Sayan Bandyapadhyay\inst {1}\and
Aritra Banik\inst{2}\and 
Sujoy Bhore\inst{3}}
\institute{Department of Computer Science, Portland State University\\ \email{sayan.bandyapadhyay@uib.no}
\and
School of Computer Sciences, NISER, Bhubaneswar, India\\ \email{aritrabanik@gmail.com}
\and
Department of Computer Science and Engineering, Indian Institute of Technology Bombay, Mumbai, India. \\ \email{sujoy@cse.iitb.ac.in}
}
\authorrunning{S. Bandyapadhyay, A. Banik, S. Bhore}
\begin{document}
\maketitle
\begin{abstract}
In this paper, we study two generalizations of \textsc{Vertex Cover} and \textsc{Edge Cover}, namely \textsc{Colorful Vertex Cover} and \textsc{Colorful Edge Cover}. 
In the \textsc{Colorful Vertex Cover} problem,
given an $n$-vertex edge-colored graph $G$ with colors from $\{1, \ldots, \omega\}$ and coverage requirements $r_1, r_2, \ldots, r_\omega$, the goal is to find a minimum-sized set of vertices that are incident on at least
$r_i$ edges of color $i$, for each $1 \le i \le \omega$, i.e., we need to cover at least
$r_i$ edges of color $i$. \textsc{Colorful Edge Cover} is similar to \textsc{Colorful Vertex Cover}, except here we are given a vertex-colored graph and the goal is to cover at least
$r_i$ vertices of color $i$, for each $1 \le i \le \omega$, by a minimum-sized set of edges. 
These problems have several applications in \emph{fair} covering and hitting of geometric set systems involving points and lines that are divided into multiple groups. Here, \say{fairness} ensures that the coverage (resp. hitting) requirement of every group is fully satisfied.

\qquad We obtain a $(2+\epsilon)$-approximation for the \textsc{Colorful Vertex Cover} problem
in time $n^{O(\omega/\epsilon)}$. Thus, for a constant number of colors, the problem admits a $(2+\epsilon)$-approximation in polynomial time.  
Next, for the \textsc{Colorful Edge Cover} problem, we design an $O(\omega n^3)$ time exact algorithm, via a chain of reductions to a matching problem. For all intermediate problems in this chain of reductions, we design polynomial-time algorithms, which might be of independent interest.  
\end{abstract}

\section{Introduction}

\textsc{Vertex Cover} and \textsc{Edge Cover} are two classical graph problems which have been studied for at least forty years \cite{GareyJ79}. \textsc{Vertex Cover} is known to be \textsf{NP}-complete and admits a 2-approximation \cite{GareyJ79}. On the other hand, \textsc{Edge Cover} can be solved in polynomial time using a connection to \textsc{Maximum Matching} \cite{GareyJ79}. In this paper, we study the following two generalizations of these problems on vertex- or edge-colored graphs. 

\defproblem{\cvcp}{A graph $G$ with $n$ vertices and $m$ edges where every edge is colored by a color from $\{1,2,\ldots,{\omega}\}$, and coverage requirements $r_1,r_2,\ldots,r_\omega$.
}{Find a minimum-sized set of vertices that are incident on at least $r_i$ edges of color $i$, for each $1\le i\le \omega$.}

\defproblem{\cecp}{A graph $G$ with $n$ vertices and $m$ edges where every vertex is colored by a color from $\{1,2,\ldots,{\omega}\}$, and coverage requirements $r_1,r_2,\ldots,r_\omega$.
}{Find a minimum-sized set $E'$ of edges such that at least $r_i$ vertices of color $i$ are incident on the edges of $E'$, for each $1\le i\le \omega$.}

Bera et al.~\cite{bera2014approximation} designed an 
$O(\log \omega)$-approximation for \cvcp. Indeed, they study a more general \say{weighted-version} called \textsc{Partition Vertex Cover}. Moreover, they noted that
an extension of the greedy algorithm of 
Slav\'{i}k~\cite{slavik1997improved} gives an $O(\log(\sum_{t=1}^{\omega}r_t))$ approximation for this problem. On the other hand, it is \textsf{NP}-hard to obtain an approximation guarantee asymptotically better than $O(\log \omega)$ \cite{bera2014approximation}. 
Cohen et al.~\cite{tropical} studied a variant of \cecp where all the requirements are 1 and the solution set of edges $E'$ must form a matching. They gave a polynomial time algorithm for this problem. 

Our motivation to study \cvcp and \cecp partly comes from a series of recent works that study the \ckc problem\footnote{The term \say{fair} stresses on the fact, in an abstract manner, that the resources should be divided evenly among different groups} \cite{DBLP:conf/ipco/AneggAKZ20,DBLP:conf/esa/Bandyapadhyay0P19,DBLP:conf/ipco/JiaSS20}. In \ckc, given a set of $n$ points in a metric space where each point is colored by a color from $\{1,2,\ldots,{\omega}\}$, coverage requirements $r_1,r_2,\ldots,r_\omega$, and an integer $k$, the goal is to find $k$ balls of minimum radius whose union contains at least $r_t$ points of color $t$, for $1 \le t \le \omega$. $O(1)$-approximations are known for this problem when the number of colors $\omega$ is a constant \cite{DBLP:conf/ipco/AneggAKZ20,DBLP:conf/ipco/JiaSS20}. In particular, one can obtain a 4-approximation in $n^{O(\omega)}$ time \cite{DBLP:conf/ipco/AneggAKZ20} and a 3-approximation in $n^{O({\omega}^2)}$ time \cite{DBLP:conf/ipco/JiaSS20}.  

Another motivation is the applications of \cvcp and \cecp to geometric set systems. In the following, we describe two such applications. 

\begin{itemize}
    \item {\bf Covering points by axis-parallel lines.} Suppose we are given a set $\mathcal{L}$ of axis-parallel lines  and a set $P$ of points in the plane, where each point in $P$ is colored by a color from $\{1,\ldots,\omega\}$. Moreover, we are given coverage requirements $r_1,\ldots, r_\omega$. The goal is to find a minimum-sized subset $\mathcal{L}'\subseteq \mathcal{L}$ such that the lines in $\mathcal{L}'$ together contain at least $r_t$ points of color $t$, for each $1 \le t \le \omega$. We note that this problem is a special case of \cvcp : the vertices correspond to the lines and edges correspond to the points -- covering points by lines is same as covering edges by vertices.  
    \vspace{1mm}
    \item {\bf Hitting axis-parallel lines by points.} We are given a set $\mathcal{L}$ of axis-parallel lines and a set $P$ of points in the plane, where each line of $\mathcal{L}$ is colored by a color from $\{1,\ldots,\omega\}$. Also we are given hitting requirements $r_1,\ldots, r_\omega$. The goal is to find a minimum-sized subset $P'\subseteq P$ such that the points in $P'$ intersect at least $r_t$ lines of color $t$, for each $1 \le t \le \omega$. Note that this problem is a special case of \cecp : again the vertices correspond to the lines and edges correspond to the points -- hitting lines by points is same as covering vertices by edges.   
\end{itemize}

\subsection{Our Results}
In this work, we achieve a $(2+\epsilon)$-approximation for \cvcp in time $n^{O(\omega/\epsilon)}$, this means that we obtain an $O(1)$-approximation in polynomial time for constant number of colors, matching the result for \ckc. 
Our algorithm is based on LP rounding and construction of a sparse LP (containing only a few constraints). Sparsity of LPs was also used in the works on \ckc. However, our approach is very different. Indeed, our rounding scheme is less complicated, as in our case each element (an edge) can be covered by only two objects (vertices). This algorithm appears in Section~\ref{sec:2-CVC}. 

We also design an $O(\omega n^3)$-time exact algorithm for the \cecp problem, via a chain of reductions to a matching problem studied by Cohen et al.~\cite{tropical}. The algorithm is described in Section~\ref{sec:3-CEC}.    


\subsection{Related Work} 

Another interesting special case of \cvcp is the \pvc problem, where the value of $\omega$ is equal to 1, i.e., the number of colors is exactly 1. Several polynomial-time 2-approximations are known in this special case via Primal-Dual and LP rounding schemes \cite{bar2001using,bshouty1998massaging,gandhi2004approximation}.  

Inamdar and Varadarajan~\cite{DBLP:journals/corr/abs-1809-06506} studied a generalization of \cvcp, called \textsc{Partition Set Cover (PSC)}. They gave an LP-rounding based $O(\beta + \log \omega)$ approximation, where $\beta$ denotes the approximation guarantee for a related \textsc{Set Cover} instance obtained by rounding the standard LP. 
See also~\cite{bera2014approximation,Har-PeledJ18,DBLP:journals/corr/abs-1809-06506, slavik1997improved} for a comprehensive understanding of this problem. 

Exploiting sparsity of LPs is a popular technique in the design of approximation algorithms. In fact, it has been successfully applied to obtain improved guarantees for several interesting optimization problems, such as $k$-median \cite{li2016approximating,li2017uniform}, $k$-median/means with outliers \cite{krishnaswamy2018constant}, facility location with lower and upper bounds \cite{friggstad2016approximating}, and \ckc \cite{DBLP:conf/ipco/AneggAKZ20,DBLP:conf/esa/Bandyapadhyay0P19,DBLP:conf/ipco/JiaSS20}. 

Related to \cecp, several colored versions of matching problems have been studied in the literature such as \tmp \cite{tropical}, $(\alpha,\beta)$-balanced matching \cite{DBLP:conf/aistats/Chierichetti0LV19}, two-sided matching \cite{freemantwo}, procedurally fair matching \cite{klaus2006procedurally}, etc. Among these the most relevant is \tmp. In fact, we are going to use a known algorithm for \tmp to obtain our result for \cecp.

\section{A $(2+\epsilon)$-approximation for \cvcp}
\label{sec:2-CVC}

First, we describe an LP-rounding based additive approximation for \cvcp, and then show how to convert this to a multiplicative $O(1)$-approximation. Suppose we are given the simple graph $G=(V,E)$ with $V=\{v_1,\ldots,v_n\}$ and $E=\{e_1,\ldots,e_m\}$.  For $1\le t\le \omega$, let $\mcc_t$ denote the color class $t$, i.e., the set of edges of color $t$.  
A vertex $v$ is said to \textit{cover} an edge $e$ if $v$ is an end vertex of $e$. A solution $S$ is said to \textit{cover} an edge $e$ if $S$ contains an end vertex of $e$.  

Next, we describe the natural ILP of \cvcp. For each edge $e_j$, we have a $0/1$ variable $x_j$ that denotes whether $e_j$ is covered in the solution. For each vertex $v_i$, there is a $0/1$ variable $y_i$ that denotes whether $v_i$ is chosen in the solution. There are two main constraints in the ILP other than the domain constraints. The first constraint is the coverage constraint which ensures that from each color class $t$, at least $r_t$ edges are covered. The second constraint is the sanity constraint which ensures that if an edge $e_j$ is covered in the solution, then at least one of its end vertices must be in the solution. The LP relaxation of the ILP is as follows.   

\vspace{2mm}
\begin{tcolorbox}
\begin{align}
\label{LP}
&\text{minimize}&\sum_{v_i \in V} y_i \nonumber \tag{\text{CVC-LP}}
\\&\text{subject to}& \sum_{j:e_j\in \mcc_t} x_j &\ge r_t & & \forall 1\le t\le \omega
\label{constr:color-coverage}
\\& & y_i+y_{i'} & \ge x_j & & \forall e_j=\{v_i,v_{i'}\} \in E \label{constr:sanity}
\\& &0\le x_j, y_i &\le 1 &&\forall e_j \in E, v_i \in V \label{constr:fractional_xy}
\end{align}
\end{tcolorbox}

\vspace{2mm}
We denote any solution to \ref{LP} by $(x,y)$. The cost of $(x,y)$ is defined as, $\text{cost}(x,y)=\sum_{v_i \in V} y_i$. Our rounding algorithm consists of two major steps. 

\paragraph{\textbf{First step.}} In the first step, we compute a fractional optimal solution $(\Bar{x},\Bar{y})$ using any LP solver and modify it to obtain another fractional solution which has a special structure. Let OPT$^{LP}$ denote the cost of $(\Bar{x},\Bar{y})$ and OPT the optimal cost.  

\begin{lemma}\label{lem:separation}
There is a solution $(\Tilde{x},\Tilde{y})$ to $\ref{LP}$ with the following properties: (i) cost$(\Tilde{x},\Tilde{y})\le 2\cdot $\emph{OPT}$^{LP}$. (ii) There is a function $\phi: E \rightarrow V$ such that for each edge $e_j=(v_{j^1},v_{j^2})$, either $\phi(e_j)=v_{j^1}$ or $\phi(e_j)=v_{j^2}$, and $\Tilde{x}_j$ is equal to the $\Tilde{y}$ value of $\phi(e_j)$.   (iii) $(\Tilde{x},\Tilde{y})$ can be obtained in polynomial time. 
\end{lemma}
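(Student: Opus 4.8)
The plan is to build $(\Tilde{x},\Tilde{y})$ directly from the optimal fractional solution $(\Bar{x},\Bar{y})$ by a simple doubling scheme, routing each edge to its heavier endpoint. Concretely, for each edge $e_j=(v_{j^1},v_{j^2})$ I would let $\phi(e_j)$ be the endpoint with the larger $\Bar{y}$-value, breaking ties arbitrarily. I then set $\Tilde{y}_i=\min\{1,2\Bar{y}_i\}$ for every vertex $v_i$, and define $\Tilde{x}_j=\Tilde{y}_{\phi(e_j)}$. This immediately yields property (ii) by construction, and property (iii) is clear, since $\phi$ and the coordinate-wise doubling are computable in linear time once $(\Bar{x},\Bar{y})$ is available from the LP solver.

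Property (i) is short: $\sum_i \Tilde{y}_i=\sum_i \min\{1,2\Bar{y}_i\}\le 2\sum_i \Bar{y}_i=2\cdot\text{OPT}^{LP}$. The substantive point is to verify that $(\Tilde{x},\Tilde{y})$ is feasible for \ref{LP}. The box constraints \eqref{constr:fractional_xy} hold because of the capping at $1$ in the definition of $\Tilde{y}$ and hence of $\Tilde{x}$. The sanity constraint \eqref{constr:sanity} is immediate: for $e_j=(v_{j^1},v_{j^2})$ we have $\Tilde{x}_j=\Tilde{y}_{\phi(e_j)}\le \Tilde{y}_{j^1}+\Tilde{y}_{j^2}$, since $\phi(e_j)$ is one of the two endpoints and all $\Tilde{y}$-values are nonnegative.

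The step I expect to carry the real content is the coverage constraint \eqref{constr:color-coverage}, and the key observation is that doubling can only increase each $x$-value, i.e. $\Tilde{x}_j\ge \Bar{x}_j$ for every $j$. Writing $v_i=\phi(e_j)$ for the heavier endpoint, the original sanity constraint gives $\Bar{x}_j\le \Bar{y}_{j^1}+\Bar{y}_{j^2}\le 2\Bar{y}_i$, and also $\Bar{x}_j\le 1$; hence $\Bar{x}_j\le \min\{1,2\Bar{y}_i\}=\Tilde{y}_{\phi(e_j)}=\Tilde{x}_j$. Summing over the edges of each color class then yields $\sum_{j:e_j\in\mcc_t}\Tilde{x}_j\ge \sum_{j:e_j\in\mcc_t}\Bar{x}_j\ge r_t$, so \eqref{constr:color-coverage} is preserved. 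Thus the only genuine thing to watch is this monotonicity of the $x$-values under doubling; everything else is bookkeeping. In particular, no preprocessing of $\Bar{x}$ is required before doubling, since feasibility of the new $x$ is inherited from the old coverage together with the inequality $\Tilde{x}_j\ge \Bar{x}_j$.
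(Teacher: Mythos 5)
Your proposal is correct and follows essentially the same argument as the paper: route each edge to its heavier endpoint under $\Bar{y}$, double and cap the $y$-values, set $\Tilde{x}_j=\Tilde{y}_{\phi(e_j)}$, and verify feasibility via the monotonicity $\Tilde{x}_j\ge\Bar{x}_j$ derived from the original sanity constraint. The paper's proof is identical in construction and in all three verification steps.
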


\begin{proof}
We construct $(\Tilde{x},\Tilde{y})$ by modifying the solution $(\Bar{x},\Bar{y})$. First, we define a function $\phi$ that assigns each edge to a vertex. For each edge $e_j=(v_{j^1},v_{j^2})\in E$, we assign $e_j$ to $v_{j^1}$ or $v_{j^2}$, whichever has the larger $y$-value in $(\Bar{x},\Bar{y})$. If both $y$-values are same, we assign $e_j$ to one of the two arbitrarily. This completes the description of the assignment $\phi$. Next, we construct a new solution based on $\phi$. For each edge $e_j$, we set its new $x$-value to the minimum of 1 and two times the $y$-value of $\phi(e_j)$, i.e., $\Tilde{x}_j=\min\{1,2\Bar{y}_{i'}\}$ where $v_{i'}=\phi(e_j)$. For each vertex $v_i$, we set its new $y$-value to the minimum of 1 and two times of its old $y$-value, i.e., $\Tilde{y}_i=\min\{1,2\Bar{y}_i\}$. Note that for each edge $e_j=(v_{j^1},v_{j^2})$, \[\Tilde{x}_j\ge \min\{1,\Bar{y}_{j^1}+\Bar{y}_{j^2}\}\ge \Bar{x}_j.\] Hence, the new solution $(\Tilde{x},\Tilde{y})$ satisfies the coverage constraints. Also with $v_{i'}=\phi(e_j)$, \[\Tilde{x}_j= \min\{1,2\Bar{y}_{i'}\}=\Tilde{y}_{i'}\le \Tilde{y}_{j^1}+\Tilde{y}_{j^2}.\] Thus, the sanity constraints are also satisfied. Moreover, cost$(\Tilde{x},\Tilde{y})$ is at most two times the cost of $(\Bar{x},\Bar{y})$. Hence, Property (i) is satisfied. Property (ii) is satisfied by construction. Lastly, as the modification takes polynomial time, $(\Tilde{x},\Tilde{y})$ can also be obtained in polynomial time.
\end{proof}

By the above lemma, we obtain a separated LP-solution where each edge gets its fractional coverage $\Tilde{x}_j$ from exactly one of the two end vertices. Based on this separation 
we write a sparse LP (containing only a few constraints) for our instance and use the sparsity of this LP to obtain an integral solution. Next, we describe the details. 

\paragraph{\textbf{Second step.}} Consider the solution $(\Tilde{x},\Tilde{y})$ and the assignment $\phi$ in Lemma \ref{lem:separation}. For each color $1\le t\le \omega$ and for each vertex $v_i \in V$, let $\mcc_{t,i}$ be the set of edges $e_j$ in $\mcc_t$ such that $\phi(e_j)=v_{i}$. Denote the size of $\mcc_{t,i}$ by $m_{t,i}$. Lastly, let $k=\sum_{i:v_i \in V} \Tilde{y}_i$. We define the following LP that does not contain too many constraints. This LP has a variable $z_i$ for each vertex $v_i$, where $1\le i\le n$.  


\vspace{2mm}

\begin{tcolorbox}
\begin{align}
\label{LP1}
&\text{maximize}& \sum_{i=1}^{n} m_{1,i} z_i \nonumber \tag{\text{Sparse-LP}}
\\&\text{subject to}& \sum_{i=1}^{n} m_{t,i} z_i &\ge r_t & & \forall 2\le t\le \omega
\label{constr:sp-color-coverage}
\\& & \sum_{i=1}^{n} z_i \le k \label{constr:sp-cost}
\\& &0\le z_i &\le 1 &&\forall 1\le i\le n \label{constr:fractional_z}
\end{align}
\end{tcolorbox}

\vspace{2mm}

\newpage
\begin{lemma}
There is a solution to \ref{LP1} whose objective function value is at least $r_1$. 
\end{lemma}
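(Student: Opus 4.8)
The plan is to produce an explicit feasible point rather than argue existence abstractly: I would simply set $z_i=\Tilde{y}_i$ for every vertex $v_i$, where $(\Tilde{x},\Tilde{y})$ and $\phi$ are as in Lemma~\ref{lem:separation}, and then verify that this point is feasible for \ref{LP1} and attains objective value at least $r_1$.

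For feasibility, the domain constraints \eqref{constr:fractional_z} hold immediately, because $(\Tilde{x},\Tilde{y})$ is a solution of \ref{LP} and hence each $\Tilde{y}_i\in[0,1]$; and the budget constraint \eqref{constr:sp-cost} holds with equality, since $\sum_{i}z_i=\sum_i\Tilde{y}_i=k$ by the definition of $k$. The coverage constraints \eqref{constr:sp-color-coverage} are where the real content lies, and they follow from a single translation identity. Because $\phi$ maps every edge of $\mcc_t$ to exactly one vertex, the sets $\mcc_{t,i}$ partition $\mcc_t$; combining this with property~(ii) of Lemma~\ref{lem:separation}, namely $\Tilde{x}_j=\Tilde{y}_{\phi(e_j)}$, I would compute
\[
\sum_{i=1}^{n} m_{t,i}\,\Tilde{y}_i=\sum_{i=1}^{n}\ \sum_{e_j\in\mcc_{t,i}}\Tilde{y}_i=\sum_{e_j\in\mcc_t}\Tilde{y}_{\phi(e_j)}=\sum_{e_j\in\mcc_t}\Tilde{x}_j\ge r_t,
\]
where the last inequality is the coverage constraint \eqref{constr:color-coverage} already satisfied by $(\Tilde{x},\Tilde{y})$. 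Applying this for every $2\le t\le\omega$ establishes all of \eqref{constr:sp-color-coverage}, so $z=\Tilde{y}$ is feasible.

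Finally, the objective value is obtained by running the very same identity with $t=1$: it yields $\sum_{i} m_{1,i}\,\Tilde{y}_i=\sum_{e_j\in\mcc_1}\Tilde{x}_j\ge r_1$, which is exactly the claimed bound. I do not expect any serious obstacle here; the only point to handle with care is the bookkeeping in the displayed identity, which merely re-expresses the coverage already guaranteed for $(\Tilde{x},\Tilde{y})$ in the new vertex-indexed coordinates $m_{t,i}$. The whole purpose of the construction in Lemma~\ref{lem:separation}, namely that each edge draws its fractional coverage from a single endpoint, is precisely what makes this re-indexing an exact equality rather than merely an inequality, and hence what lets the objective value of the new solution match $\sum_{e_j\in\mcc_1}\Tilde{x}_j$ on the nose.
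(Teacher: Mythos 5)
Your proposal is correct and follows essentially the same route as the paper's own proof: both set $z_i=\Tilde{y}_i$, observe that the budget and domain constraints hold trivially, and use the partition $\{\mcc_{t,i}\}_i$ of $\mcc_t$ together with $\Tilde{x}_j=\Tilde{y}_{\phi(e_j)}$ to reduce each sum $\sum_i m_{t,i}z_i$ to $\sum_{e_j\in\mcc_t}\Tilde{x}_j\ge r_t$, covering both the constraints ($t\ge 2$) and the objective ($t=1$) with the same identity.
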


\begin{proof}
For each vertex $v_i\in V$, set $z_i=\Tilde{y}_{i}$. Constraint \ref{constr:sp-cost} is trivially satisfied. Now, fix any $1\le t\le \omega$. 
\begin{align*}
    \sum_{i=1}^{n} m_{t,i} z_i  = \sum_{i=1}^{n} m_{t,i} \cdot \Tilde{y}_{i} 
    &  = \sum_{i=1}^{n} \sum_{e_j\in \mcc_{t,i}} \Tilde{y}_{i} & \text{ (as }  m_{t,i}=|\mcc_{t,i}|\text{)}\\
    & = \sum_{i=1}^{n} \sum_{e_j\in \mcc_{t,i}} \Tilde{x}_j & \text{ (from the definitions of } \mcc_{t,i}\text{ and }\phi\text{)}\\
    & =\sum_{j:e_j\in \mcc_t} \Tilde{x}_j & \text{ (as }\mcc_{t,i} \text{ is a partition of } \mcc_t\text{)}\\
    & \ge r_t \qquad & \text{ (by Constraint \ref{constr:color-coverage} of \ref{LP})}
\end{align*}
Hence the lemma follows. 
\end{proof}

Next, we compute a fractional optimal solution $\hat{z}$ to \ref{LP1} using any LP solver. The above lemma implies the value of this solution is at least $r_1$. In the following, we argue about some additional properties of this solution. For that we need the following lemma   
(Lemma 2.1.4 in \cite{lau2011iterative}).

\begin{lemma}[\cite{lau2011iterative}] \label{lemma:ranklemma}
	In any extreme point feasible solution (or equivalently, a basic feasible solution) to a linear program, the number of linearly independent tight constraints is equal to the number of variables.
\end{lemma}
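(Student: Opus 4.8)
The plan is to establish the stated equality by bounding the number of linearly independent tight constraints from both sides. First I would bring the linear program into a single polyhedral form $P = \{x \in \mathbb{R}^n : Ax \le b\}$ by rewriting every equality as a pair of opposing inequalities and folding the variable bounds into rows of $A$; this lets me treat all constraints uniformly and speak of an extreme point $x^{\ast}$ of $P$. Let $A_{=}$ be the submatrix of $A$ consisting of the rows that are tight (satisfied with equality) at $x^{\ast}$. Since $A_{=}$ has exactly $n$ columns, its rank is at most $n$, which gives the easy direction: the number of linearly independent tight constraints cannot exceed the number of variables.

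The substance is the reverse inequality, and I would prove it by contradiction via a perturbation argument. Suppose $\mathrm{rank}(A_{=}) < n$. Then the kernel of $A_{=}$ is nontrivial, so there is a nonzero direction $d$ with $A_{=} d = 0$. I would consider the two points $x^{\ast} \pm \epsilon d$ for a small $\epsilon > 0$. Every tight row $a_i$ satisfies $a_i^{\top}(x^{\ast} \pm \epsilon d) = a_i^{\top} x^{\ast} = b_i$, so the tight constraints stay satisfied; every non-tight row obeys $a_i^{\top} x^{\ast} < b_i$ strictly, and since there are finitely many of them I can pick a single $\epsilon$ small enough that $a_i^{\top}(x^{\ast} \pm \epsilon d) < b_i$ continues to hold. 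Hence both perturbed points lie in $P$, yet $x^{\ast} = \tfrac{1}{2}\big((x^{\ast} + \epsilon d) + (x^{\ast} - \epsilon d)\big)$ with the two points distinct, contradicting that $x^{\ast}$ is an extreme point. Therefore $\mathrm{rank}(A_{=}) \ge n$, and combined with the upper bound we obtain $\mathrm{rank}(A_{=}) = n$, i.e., exactly $n$ linearly independent tight constraints; the same rank count certifies the basic-feasible-solution viewpoint mentioned in the statement.

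I expect the only delicate point to be the uniform choice of $\epsilon$ that preserves feasibility of all strictly satisfied constraints simultaneously; this relies on finiteness of the constraint set and on linearity, so it is routine but worth stating carefully. A secondary bookkeeping issue is confirming that the reduction to the form $Ax \le b$ does not change which solutions are extreme — equality constraints become always-tight row pairs and box constraints become ordinary rows, so both extremality and the tight-constraint count are preserved — after which the argument above applies verbatim.
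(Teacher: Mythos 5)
Your proof is correct, but note that the paper itself does not prove this statement at all: it is imported verbatim as Lemma~2.1.4 of the cited reference (Lau, Ravi, and Singh, \emph{Iterative Methods in Combinatorial Optimization}) and used as a black box in the proof of Lemma~\ref{lem:almost-integral}. So there is no in-paper argument to compare against; the relevant comparison is with the cited book, and your argument is essentially the standard one given there: the rank upper bound is trivial, and the rank lower bound follows from the perturbation $x^{\ast} \pm \epsilon d$ along a kernel direction $d$ of the tight rows, exhibiting $x^{\ast}$ as a midpoint of two distinct feasible points. Your handling of the uniform $\epsilon$ (finiteness of the strictly satisfied constraints) and of the reduction to the form $Ax \le b$ (an equality becomes two tight rows spanning the same line, so the count of linearly independent tight constraints is unchanged) are both sound. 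The only loose end is the parenthetical \say{or equivalently, a basic feasible solution}: your rank computation gives the direction extreme point $\Rightarrow$ $n$ linearly independent tight constraints, but certifying the equivalence also needs the converse (if the tight rows have rank $n$, any decomposition $x^{\ast} = \tfrac12(y+z)$ with $y,z$ feasible forces $A_{=}(y-z)=0$, hence $y=z$). Since the lemma as stated only asserts the property at extreme points, this omission is cosmetic rather than a gap.
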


The following lemma is an easy consequence of the above lemma. 

\begin{lemma}\label{lem:almost-integral}
The number of fractional variables in $\hat{z}$ is at most $\omega$. 
\end{lemma}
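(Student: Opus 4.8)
The plan is to apply the rank lemma (Lemma~\ref{lemma:ranklemma}) to $\hat z$, viewed as a basic feasible solution of \ref{LP1}. Any standard LP solver (for instance the simplex method) returns an extreme point of the feasible polytope, so I may assume without loss of generality that $\hat z$ is a vertex; alternatively one can always move an arbitrary optimal solution to an extreme optimal solution without increasing cost. Since \ref{LP1} has exactly $n$ variables $z_1, \ldots, z_n$, the rank lemma then guarantees that $\hat z$ satisfies exactly $n$ linearly independent tight constraints.

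Next I would partition the constraints of \ref{LP1} into two groups. The first group consists of the box constraints \ref{constr:fractional_z}, that is, $z_i \ge 0$ and $z_i \le 1$ for each $i$. For a variable $z_i$ that is integral (equal to $0$ or $1$) exactly one of these is tight, whereas for a fractional variable neither is tight. The second group consists of the $\omega - 1$ colour-coverage constraints \ref{constr:sp-color-coverage} together with the single cost constraint \ref{constr:sp-cost}, giving a total of $\omega$ \emph{structural} constraints.

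The counting step is the heart of the argument. Let $f$ denote the number of fractional variables in $\hat z$. The integral variables then contribute exactly $n - f$ tight box constraints, and these are linearly independent since each is supported on a single distinct coordinate. Every remaining tight constraint must come from the group of $\omega$ structural constraints, so the total number of linearly independent tight constraints is at most $(n - f) + \omega$. Combining this bound with the equality supplied by the rank lemma yields $n \le (n - f) + \omega$, which immediately gives $f \le \omega$, as required.

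The only point requiring care is the linear-independence bookkeeping: one must check both that the $n - f$ tight box constraints stay independent and that at most $\omega$ further independent constraints can be extracted from the structural group. Both facts are essentially automatic here, because the tight box constraints involve pairwise-disjoint single variables and the structural group simply contains $\omega$ constraints in total. Consequently I do not anticipate any real obstacle beyond making explicit that $\hat z$ is taken to be an extreme point rather than an arbitrary optimal solution.
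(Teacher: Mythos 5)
Your proof is correct and takes essentially the same approach as the paper: both apply the rank lemma to $\hat z$ as a basic feasible solution of \ref{LP1}, split the constraints into the $2n$ box constraints and the $\omega$ structural ones (the $\omega-1$ coverage constraints plus the cost constraint), and count linearly independent tight constraints to conclude that at most $\omega$ variables can be fractional. Your explicit observation that $\hat z$ must be an extreme point (which any vertex-returning solver provides, or which can be obtained by moving to an extreme optimal solution) is a small point the paper leaves implicit, but the argument is otherwise identical.
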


\begin{proof}
First, note that \ref{LP1} has $2n+\omega$ constraints and $n$ variables. Now, by Lemma \ref{lemma:ranklemma}, the number of linearly independent tight constraints in $\hat{z}$ is $n$. Consider the set $S$ of $2n$ constraints $0 \le z_i\le 1$. As there are only $\omega$ more constraints in the LP, there must be at least $n-\omega$ many linearly independent constraints in $S$ which are tight in $\hat{z}$. Note that the two constraints $z_i\ge 0$ and $z_i\le 1$ cannot be tight simultaneously for any fixed $i$, as it would imply $z_i=0$ and $z_i=1$. Hence, it must be the case that at least $n-\omega$ variables are integral in $\hat{z}$, and the lemma follows.  
\end{proof}

Based on the above lemma we compute an integral solution $z^*$ to \ref{LP1} by rounding the values of the at most $\omega$ fractional variables to 1. Note that this integral solution satisfies all the constraints except Constraint \ref{constr:sp-cost}. But, as we round at most $\omega$ variables, it follows that $\sum_{i=1}^{n} z_i^* \le k+\omega$. Thus, we obtain a set $\Gamma$ of at most $k+\omega$ vertices in $V$ that for each $1\le t\le \omega$, cover at least $r_t$ edges from $\mcc_t$. 
Thus, $\Gamma$ is a feasible solution for \cvcp. By noting that $k=$ cost$((\Tilde{x},\Tilde{y}))\le 2\cdot \text{OPT}^{LP}$, we obtain the following theorem. 

\begin{theorem}\label{thm:additiveapprox}
There is a feasible solution to \cvcp with cost at most $2\cdot$\emph{OPT}$+\omega$ that can be computed in polynomial time. 
\end{theorem}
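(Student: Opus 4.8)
The plan is to assemble the facts established in this section into a single feasibility-and-cost argument for the vertex set $\Gamma$ produced from the rounded solution $z^*$. First I would fix $\Gamma = \{v_i : z_i^* = 1\}$ and argue that $\Gamma$ is feasible for \cvcp, i.e.\ that it covers at least $r_t$ edges of color $t$ for every $1 \le t \le \omega$. The key observation is that, since $\phi$ assigns each edge to exactly one of its endpoints, the classes $\mcc_{t,i}$ form a partition of $\mcc_t$; hence every edge counted in $\sum_i m_{t,i} z_i^*$ is assigned to a selected vertex and is therefore genuinely covered by $\Gamma$, with no double counting. Consequently the number of color-$t$ edges covered by $\Gamma$ is at least $\sum_i m_{t,i} z_i^*$.

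Next I would split the coverage verification by color. For colors $t \ge 2$ the bound $\sum_i m_{t,i} z_i^* \ge r_t$ is immediate: $\hat z$ satisfies Constraint~\ref{constr:sp-color-coverage}, and rounding its at most $\omega$ fractional entries up to $1$ only increases each left-hand side, so the constraint survives for $z^*$. For color $t = 1$ the requirement is encoded in the objective rather than in a constraint, so here I would invoke the earlier lemma guaranteeing an \ref{LP1}-solution of value at least $r_1$: since $\hat z$ is optimal its objective is also at least $r_1$, and rounding up only increases $\sum_i m_{1,i} z_i$, giving $\sum_i m_{1,i} z_i^* \ge r_1$. This closes the feasibility argument.

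For the cost bound I would chain the inequalities already in hand: $|\Gamma| = \sum_i z_i^* \le k + \omega$, since $\hat z$ obeys Constraint~\ref{constr:sp-cost} and at most $\omega$ variables are rounded up, each by less than one (Lemma~\ref{lem:almost-integral}); then $k = \text{cost}(\tilde x, \tilde y) \le 2\,\text{OPT}^{LP}$ by Lemma~\ref{lem:separation}(i); and finally $\text{OPT}^{LP} \le \text{OPT}$ because \ref{LP} is a relaxation of the integer program. Combining yields $|\Gamma| \le 2\,\text{OPT} + \omega$. The polynomial running time follows because every step — solving \ref{LP}, the transformation of Lemma~\ref{lem:separation}, computing an extreme-point optimum of \ref{LP1}, and rounding at most $\omega$ variables — runs in polynomial time.

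I expect the only genuinely delicate point to be the first one: confirming that $\sum_i m_{t,i} z_i^*$ is a valid \emph{lower} bound on the edges actually covered by $\Gamma$. One must use the partition property of $\phi$ to rule out counting an edge under two distinct vertices, while also noting that $\Gamma$ may cover additional color-$t$ edges whose assigned endpoint is \emph{not} in $\Gamma$ but whose other endpoint is — so the LP count is only a lower bound, which is precisely what feasibility requires. The treatment of color $1$ through the objective rather than a coverage constraint is the second place where the requirement could be silently lost, so I would state that step explicitly.
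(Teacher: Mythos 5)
Your proposal is correct and follows essentially the same route as the paper: round an extreme-point optimum $\hat z$ of \ref{LP1} up on its at most $\omega$ fractional coordinates, verify coverage via the partition property of $\phi$ (constraints for $t \ge 2$, objective value $\ge r_1$ for $t=1$), and chain $|\Gamma| \le k + \omega \le 2\cdot\mathrm{OPT}^{LP} + \omega \le 2\cdot\mathrm{OPT} + \omega$. The two points you flag as delicate (no double counting under $\phi$, and color $1$ living in the objective) are exactly the steps the paper leaves implicit, and you handle them correctly.
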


Next, we show how to convert the above additive approximation to a multiplicative constant approximation, albeit with a time complexity that exponentially depends on $\omega$. 

\begin{theorem}\label{thm:multapprox}
For any $\epsilon > 0$, there is a $(2+\epsilon)$-approximation for \cvcp in $n^{O(\omega/\epsilon)}$ time.
\end{theorem}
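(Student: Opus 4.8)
The plan is to combine the additive guarantee of Theorem~\ref{thm:additiveapprox} with brute-force enumeration, separated by a threshold on the optimal value. The key observation is that the additive term $\omega$ is negligible compared to $2\cdot\text{OPT}$ precisely when $\text{OPT}$ is large, whereas when $\text{OPT}$ is small the optimal solution is tiny and can be found by exhaustive search.

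First I would fix a threshold $\tau=\omega/\epsilon$ and reason about two regimes according to the (unknown) value of $\text{OPT}$. If $\text{OPT}\ge \tau$, then $\omega\le \epsilon\cdot\text{OPT}$, so the solution produced by Theorem~\ref{thm:additiveapprox} has cost at most $2\cdot\text{OPT}+\omega\le(2+\epsilon)\cdot\text{OPT}$; that is, it is already a $(2+\epsilon)$-approximation. If instead $\text{OPT}<\tau$, then every minimum feasible solution uses fewer than $\omega/\epsilon$ vertices, so enumerating all vertex subsets of size at most $\lfloor\omega/\epsilon\rfloor$ and testing each for feasibility will encounter an optimal solution and hence return a feasible one of cost exactly $\text{OPT}$.

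Since we do not know in advance which regime applies, the algorithm simply runs both procedures and outputs the cheaper feasible result; correctness then follows from the case analysis, as in either regime the returned solution costs at most $(2+\epsilon)\cdot\text{OPT}$. For the running time, the number of enumerated subsets is $\sum_{s=0}^{\lfloor\omega/\epsilon\rfloor}\binom{n}{s}=n^{O(\omega/\epsilon)}$, and checking feasibility of a fixed subset---counting, for each color $t$, the number of covered edges and comparing it with $r_t$---takes polynomial time, while the additive algorithm runs in polynomial time by Theorem~\ref{thm:additiveapprox}. Hence the overall running time is $n^{O(\omega/\epsilon)}$.

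Once the threshold idea is in place, the proof is essentially bookkeeping. The only point requiring slight care is the feasibility test during enumeration: we must count covered \emph{edges} (not endpoints) correctly and return a minimum-size feasible subset, so that the small-$\text{OPT}$ case genuinely recovers $\text{OPT}$. I do not expect a real obstacle, since both the large-$\text{OPT}$ and small-$\text{OPT}$ regimes reduce to routine arguments.
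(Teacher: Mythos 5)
Your proposal is correct and follows essentially the same approach as the paper: both use the threshold $\omega/\epsilon$ on $\mathrm{OPT}$, handle small $\mathrm{OPT}$ by exhaustive enumeration of vertex subsets of size at most $\omega/\epsilon$ (which then recovers an optimum), and handle large $\mathrm{OPT}$ via the additive guarantee of Theorem~\ref{thm:additiveapprox}, where $\omega \le \epsilon\cdot\mathrm{OPT}$. The only cosmetic difference is that the paper runs the enumeration first and falls back to the LP-based algorithm only if no feasible small solution exists, whereas you run both and take the cheaper output; these are equivalent.
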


\begin{proof}
Fix $\epsilon > 0$. First, we enumerate all the solutions of size $\kappa=1,2,\ldots,\omega/{\epsilon}$ in $n^{O(\omega/\epsilon)}$ time. We stop the first time we obtain a feasible solution. Thus, if we obtain a feasible solution at some step, it must be an optimal solution, and we are done. Otherwise, the optimal cost OPT is more than $\omega/{\epsilon}$. In this case, we use our additive approximation algorithm based on LP rounding. By Theorem \ref{thm:additiveapprox}, we obtain a feasible solution with cost at most $2\cdot $OPT$+\omega < 2\cdot$ OPT$+\epsilon\cdot$ OPT=$(2+\epsilon)\cdot$ OPT. 
\end{proof}

\paragraph{Remark.} The above LP rounding based scheme is much more general in the sense that it also yields an $(f+\epsilon)$-approximation for \textsc{Partition Set Cover} in $n^{O(\omega/\epsilon)}$ time, where each element appears in at most $f$ sets in the input. Here $n$ is the input size. The idea is again similar: assign each element to a unique set having the largest variable value. The $f$ factor comes from the fact that the variable value of each set is scaled up by $f$ factor to obtain the new LP solution where each element is (fractionally) covered by exactly one set.

\section{A polynomial Time Algorithm for \cecp}
\label{sec:3-CEC}

In this section, we study the \cecp problem and obtain a polynomial time exact algorithm. 
In particular, the algorithm runs in $O(\omega n^3)$ time. 

An edge $e$ is said to cover a vertex $v$ if $e$ is incident on $v$. A set of edges $E'$ covers the set of vertices $V'=\{v \mid \exists e\in E' \text{ such that } e \text{ covers } v\}$. 
In the rest of this section, we design the algorithm for \cecp, which is based on an algorithm for the following matching problem. 

\defproblem{\bmp}{A graph $G$ with $n$ vertices and $m$ edges where every vertex is colored by a color from $\{1,2,\ldots,{\omega}\}$, and coverage requirements $r_1,r_2,\ldots, r_\omega$.}{Find a minimum-sized matching which covers at least $r_i$ vertices of color $i$ for each $1\le i\le \omega$.}

We design a polynomial time algorithm for \bmp. But before that, we have the following observation which establishes a connection between \cecp and \bmp.

\begin{lemma}\label{bmp-cecp}
	If \bmp can be solved in time $T(n,\omega)$, then \cecp can be solved in time $T(2n,\omega+1)+O(m+n)$.
\end{lemma}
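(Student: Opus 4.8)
The plan is to reduce a \cecp instance on $G$ to a single \bmp instance on an auxiliary graph $H$ with $2n$ vertices and $\omega+1$ colors, where $H$ is built in $O(m+n)$ time, and then convert an optimal matching of $H$ back into an optimal edge set of $G$ in $O(m+n)$ time; the total cost is then $T(2n,\omega+1)+O(m+n)$. First I would discard the isolated vertices of $G$, since such a vertex can never be covered by an edge and is irrelevant to \cecp. Then I would build $H$ by adjoining, for every vertex $v$, a fresh copy $v'$, so that $V(H)=V\cup V'$ with $|V(H)|=2n$: each original vertex keeps its color, while every copy $v'$ receives a brand-new color $\omega+1$. The edges of $H$ are all the original edges of $G$ together with one \emph{pendant edge} $\{v,v'\}$ per vertex $v$, and the requirements are $r_1,\dots,r_\omega$ augmented by $r_{\omega+1}=0$, so the copies never influence feasibility.

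The conceptual engine behind the reduction is that an optimal \cecp solution may be taken to be a \emph{star forest}. Indeed, in a minimum-size feasible edge set no edge is redundant, and any cycle or any path on four or more vertices would contain an edge whose deletion leaves both of its endpoints (and hence all covered vertices) still covered; this forces every connected component to be a star. Inside a star, one edge from the center to an arbitrary leaf behaves like a matching edge, whereas each remaining center--leaf edge is a pendant contributing exactly one newly covered vertex, namely its leaf. The purpose of the copies is precisely to let a matching imitate these pendants: matching a leaf $\ell$ to its copy $\ell'$ covers $\ell$ using a single matching edge, exactly as the pendant did, while the center remains covered by its one matching edge. This is the step that makes matchings, which cover vertices two at a time, expressive enough to capture stars, which may cover many vertices cheaply, and it is what forces the vertex doubling.

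With this in place I would establish the two inequalities relating the optima. For the forward direction, given a star-forest solution $E'$ of size $k$, I map each star to a matching of the same size in $H$ by keeping one center--leaf edge and replacing every pendant $\{c,\ell\}$ by the copy edge $\{\ell,\ell'\}$; the resulting edges are pairwise disjoint and cover exactly the same set of original vertices, so they form a feasible \bmp matching of size $k$, giving $\mathrm{OPT}_{\bmp}\le\mathrm{OPT}_{\cecp}$. For the backward direction, given any feasible matching $M$ in $H$ of size $k'$, I keep its original edges and replace each copy edge $\{v,v'\}$ by a genuine edge of $G$ incident to $v$, which exists because $v$ is non-isolated. Since a vertex matched to its copy lies in no original edge of $M$, these replacements are distinct from the retained edges, so the produced edge set has size at most $k'$ and still covers every original vertex covered by $M$; hence $\mathrm{OPT}_{\cecp}\le\mathrm{OPT}_{\bmp}$. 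Equality of the two optima shows that solving the \bmp instance on $H$ and applying the $O(m+n)$-time back-conversion solves \cecp.

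The main obstacle I anticipate is the backward conversion rather than the forward one: I must argue carefully that replacing copy edges by real edges neither inflates the edge count nor loses any required coverage. The disjointness observation, that a copy-matched vertex is untouched by the original edges of $M$, is exactly what pins the count at $k'$, and restricting attention to non-isolated vertices is what guarantees a real replacement edge always exists; any extra endpoints dragged in by these replacements only help, since \cecp asks for \emph{at least} $r_i$ covered vertices of each color. Verifying that these two safeguards are simultaneously sufficient, together with the star-forest normal form underpinning the forward map, is the crux of the argument.
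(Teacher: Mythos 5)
Your proposal is correct and follows essentially the same route as the paper's proof: the identical auxiliary construction (one pendant copy per vertex with a fresh color of requirement $0$), the same star-forest normalization for mapping a \cecp solution to a matching, and the same replacement of copy edges by incident real edges in the reverse direction. If anything, your backward direction is slightly more explicit than the paper's, since you spell out why the replacement edges are disjoint from the retained matching edges, which the paper leaves implicit.
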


\begin{proof}
	Suppose we would like to solve \cecp on a given instance $\I$ consisting of a vertex-colored graph $G=(V,E)$ and a set of colors $\{1,2,\ldots,{\omega}\}$. WLOG, there is no isolated vertex in $G$. We construct a new instance $\I'$ of \bmp consisting of a vertex-colored graph $G'=(V',E')$ and a set of colors $\{1,2,\ldots,{\omega},C\}$ as follows. 
	
	For each vertex $v\in V$, we add two vertices, $v$ and an auxiliary vertex $a(v)$, to $V'$. The color of $v$ in $G'$ is same as the color of $v$ in $G$ and the color of $a(v)$ is $C$. $E'$ consists of all the edges in $E$ and the edge $(v,a(v))$ for each $v\in V$. For each color $1\le i\le \omega$, the requirement $r_i$ in $\I'$ remains the same as in $\I$. The requirement corresponding to $C$ is set to 0. Note that $|V'|=2n$ and $|E'|=m+n=O(m)$. 
	
	Next, we show that $\I$ has a solution to \cecp with at most $k$ edges if and only if $\I'$ has a solution to \bmp with at most $k$ edges. First, assume that there is a set of edges $E_1\subseteq E$ of size $k$ which is a solution to \cecp. We construct a matching $M$ for $\I'$ from $E_1$. First, note that if there is a path in $E_1$ consisting of three edges, we can always remove the middle edge from the solution without losing any vertex coverage. Thus, WLOG, we can assume that $E_1$ is a collection of star graphs. Consider any such star $S$. We include any arbitrary edge $(s,v)$ of $S$ in $M$ where $s$ is the central vertex of $S$. For any other edge $(s,u)$ in $S$, we include $(u,a(u))$ in $M$. By construction, $M$ is a matching in $G'$ of size at most $k$. Also, all the requirements are trivially satisfied. 
	
	Now, suppose there is a matching $M$ in $G'$ of size $k$ which is a solution to \bmp. We construct a solution $E_1$ for $\I$ from $M$. For each edge $e\in M\cap E$, include $e$ in $E_1$. For each edge $e\in M\cap (E'\setminus E)$, where $e=(u,a(u))$, include any arbitrary edge $(u,v)$ of $E$ in $E_1$ (that covers $u$). Note that such an edge always exist, as there is no isolated vertex in $G$. Again, by construction, $E_1$ is a feasible solution to $\I$ of size $k$.   
	
	We solve \bmp on $\I'$ to obtain a matching $M$ of the minimum size, say $k$. We return the set of edges $E_1$ as constructed above as the solution to $\I$. We claim that $E_1$ is a solution to \cecp of the minimum size. Suppose it is not. Suppose there is a solution $E_2$ to \cecp of size $k' < k$. Then, by the above discussion, there is a solution to \bmp of size at most $k' < k$. But, this is a contradiction to the assumption that $M$ is a minimum size solution. 
	
	Finally, \bmp can be solved on $\I'$ in $T(2n, \omega+1)$ time, and construction of $G'$ can be done in $O(m+n)$ time. Hence, the lemma follows. 
\end{proof}

In the following, we design an algorithm for \bmp which runs in $O(\omega n^3)$ time. Hence, by the above lemma, we have the following theorem. 

\begin{theorem}\label{cecp-poly}
    \cecp can be solved in $O(\omega n^3)$ time. 
\end{theorem}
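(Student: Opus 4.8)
The plan is to prove Theorem~\ref{cecp-poly} by combining Lemma~\ref{bmp-cecp} with a dedicated $O(\omega n^3)$ algorithm for \bmp. First I would invoke Lemma~\ref{bmp-cecp}: if \bmp admits an algorithm with running time $T(n,\omega)=O(\omega n^3)$, then \cecp is solved in time $T(2n,\omega+1)+O(m+n)=O\big((\omega+1)(2n)^3\big)+O(n^2)=O(\omega n^3)$, since $m\le n^2$, and the doubling of the vertex set together with the single extra color only inflate the bound by a constant and a $(\omega+1)/\omega$ factor. So the entire burden shifts to solving \bmp within the claimed budget, and this is the step I would spend the effort on.

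For \bmp I would start from a reformulation that exposes its combinatorial core. Since a matching $M$ saturates exactly $2|M|$ vertices, writing $n_i$ for the number of color-$i$ vertices, minimizing $|M|$ subject to covering at least $r_i$ vertices of each color is equivalent to \emph{maximizing the number of exposed (unmatched) vertices} subject to leaving at most $n_i-r_i$ color-$i$ vertices exposed. Equivalently, I seek a minimum vertex set $S$ such that $G[S]$ has a perfect matching and $|S\cap V_i|\ge r_i$ for every color $i$, where $V_i$ is the $i$-th color class. This ``matchable set with per-color lower bounds'' view is what I would feed into a matching engine.

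From here the plan is to realize \bmp through the matching primitive of Cohen et al.~\cite{tropical} for \tmp, which is exactly the unit-requirement special case ($r_i=1$). I would set up a chain of reductions that (i) encodes each threshold constraint ``$\ge r_i$'' so that it becomes either a family of unit (tropical-style) color constraints or, more efficiently, a weighted matching whose optimum is forced to meet the thresholds, and (ii) keeps the host graph on $O(n)$ vertices and $O(\omega)$ colors, so that $O(\omega)$ weighted general matching computations, each solvable in $O(n^3)$ by Edmonds'/Gabow's algorithm, stay inside the $O(\omega n^3)$ envelope. The intermediate \mecp problem is the natural waypoint: solve a budgeted/maximum edge-cover relaxation and then correct it into a matching.

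The hard part will be step (i): converting the general requirements $r_i>1$ down to the unit-requirement tropical setting \emph{without losing optimality in the matching size and without blowing up the number of colors or vertices}. A naive encoding that splits color $i$ into $r_i$ sub-colors destroys the $O(\omega)$ color bound (and hence the running time), so the reduction must instead enforce the thresholds through capacities/weights and argue, via an exchange argument on the matching, that an optimal solution of the reduced instance pulls back to an optimal \bmp solution. Verifying this optimality-preservation across each link of the chain, and bounding the number of matching calls by $O(\omega)$, is the crux; the $O(\omega n^3)$ bottleneck should then reduce to a single sweep of $O(\omega)$ weighted matching computations, after which Lemma~\ref{bmp-cecp} delivers Theorem~\ref{cecp-poly}.
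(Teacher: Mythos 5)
Your outer layer is fine: invoking Lemma~\ref{bmp-cecp} and checking that $T(2n,\omega+1)+O(m+n)$ stays $O(\omega n^3)$ is exactly how the paper wraps up. But the entire content of Theorem~\ref{cecp-poly} is the $O(\omega n^3)$ algorithm for \bmp, and your proposal never produces one. You reformulate \bmp (as maximizing exposed vertices / finding a matchable vertex set with per-color lower bounds), you list the properties a reduction to \tmp would need, and then you explicitly label the threshold-encoding step as ``the crux'' and leave it unconstructed. That step is not a verification detail one can defer --- it is the missing idea, and without it there is no algorithm and no proof.

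Moreover, the design constraint that drives your plan --- that the reduced instance must keep $O(\omega)$ colors, because blowing up the color count ``destroys the running time'' --- is a red herring that steers you away from the actual construction. The running time of Cohen et al.\ for \tmp is $O(n_1 m_1)$, independent of the number of colors, so there is no penalty for introducing many colors. The paper's reduction (Lemma~\ref{lemma:tropical}) exploits exactly this freedom: every original vertex receives its \emph{own unique color}, which forces any feasible tropical matching to saturate all of $V$; for each color class $x$ (with $n_x$ vertices and requirement $r_x$) one adds $n_x-r_x$ auxiliary vertices, all sharing one dummy color $C$, joined completely to the class, so that auxiliary edges can absorb at most $n_x-r_x$ vertices of class $x$ and hence at least $r_x$ vertices must be matched by original edges; finally a forced edge $(c_t,d_t)$ with fresh colors handles the bookkeeping. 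A counting argument then shows \bmp has a solution of size $k$ if and only if the tropical instance has a matching of size $n-k+1$, converting the minimization into the maximization that Cohen et al.\ solve; the constructed graph has $O(n)$ vertices and $O(\omega n^2)$ edges, giving $O(\omega n^3)$. Your alternative of enforcing the thresholds through weights/capacities and $O(\omega)$ calls to weighted general matching is only an unproved sketch --- you give no encoding and no exchange argument --- so as written the proposal does not establish the theorem.
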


To solve \bmp, we show that it can be converted to a problem where each color has unit requirement. Essentially we need the following problem definition due to 
Cohen et al.~\cite{tropical}.

\defproblem{\tmp}{A graph $G$ with $n_1$ vertices and $m_1$ edges where every vertex is colored by a color from $\{1,2,\ldots,{\omega}\}$.
}{Find a maximum-sized matching which covers at least one vertex of color $i$ for each $1\le i\le \omega$.}

We need the following theorem due to 
Cohen et al.~\cite{tropical}.
\begin{theorem}\label{thm:cohen}
	\cite{tropical} \tmp can be solved in $O(n_1 m_1)$ time.  
\end{theorem}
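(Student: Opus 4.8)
The plan is to reduce \tmp to one ordinary maximum-matching computation followed by a sequence of size-preserving ``color-covering'' rotations, exploiting the fact that enlarging a matching never shrinks its set of covered vertices. Concretely, I would first record the following monotonicity observation: if $P$ is an augmenting path for a matching $M$, then after flipping, the two free endpoints of $P$ become matched while every internal vertex of $P$ stays matched; hence the covered set satisfies $V(M\triangle P)=V(M)\cup\{\text{two endpoints}\}$ and only grows. Consequently, any matching that already covers a vertex of every color can be augmented all the way to a maximum matching while remaining tropical. This yields the key structural fact: \emph{whenever a tropical matching exists at all, some maximum matching is tropical}, so the optimum of \tmp has size exactly $\nu(G)$. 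The problem therefore splits into (a) computing a maximum matching $M_0$ of size $\nu(G)$, and (b) transforming $M_0$ --- keeping its size fixed --- into a maximum matching whose covered set meets every color class, or certifying that none exists.

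For step (b) I would work with alternating structures that preserve maximality. Since $M_0$ is maximum, no free vertex admits an augmenting path, so the only size-preserving modifications are flips of even alternating cycles and of even-length alternating paths that start at a free vertex $v$ and end by freeing a matched vertex $w$; such a flip covers $v$, uncovers $w$, and leaves $|M_0|$ unchanged. Because the symmetric difference of any two maximum matchings is a disjoint union of exactly these even cycles and paths, every maximum matching --- in particular the target tropical one, if it exists --- is reachable from $M_0$ by such flips. The algorithm then processes the colors: for each color $t$ not yet covered, it runs a single multi-source alternating search (an Edmonds blossom forest, as $G$ need not be bipartite) rooted simultaneously at all currently-free color-$t$ vertices, seeking a flip that newly covers a color-$t$ vertex while uncovering only a vertex whose color stays covered elsewhere; one such search costs $O(m_1)$. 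With at most $\omega\le n_1$ color phases plus the initial matching, the total running time is $O(n_1 m_1)$, matching the claimed bound.

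The main obstacle is the correctness of step (b): I must show that, so long as a tropical matching exists, the greedy procedure never gets stuck and never destroys a color it has already secured. The delicate point is that covering a new color by a rotation forces some other vertex to become free, and I must guarantee that this sacrificed vertex is never the last representative of an already-covered color. I would control this with the Gallai--Edmonds decomposition $V=D\sqcup A\sqcup C$: every vertex of $A\cup C$ is saturated by \emph{every} maximum matching, so colors touching $A\cup C$ are covered for free and are never at risk, while the only vertices a size-preserving flip can free lie in the factor-critical components of $D$, where factor-criticality lets me relocate the single free vertex to essentially any chosen vertex of the component. This reduces the residual question --- can the at-most-one free vertex per deficient $D$-component be placed so that every color confined to $D$ keeps a covered representative? --- to a Hall-type condition between deficient components and $D$-confined colors. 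I would prove that this auxiliary condition is exactly the feasibility criterion for \tmp, that it is checked and realized by a single auxiliary matching of size $O(n_1)$, and that ordering the color phases by this decomposition makes each greedy rotation safe; infeasibility (for instance when $\nu(G)<\omega/2$, or when a color class is structurally unreachable) is then detected precisely when the search for a safe flip fails.
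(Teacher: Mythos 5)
First, a point of calibration: the paper does not prove this statement at all --- Theorem~\ref{thm:cohen} is imported verbatim from Cohen et al.~\cite{tropical} and used as a black box in the reduction of Lemma~\ref{lemma:tropical}, so your attempt can only be measured against the cited work, not against anything internal to this paper. That said, the first half of your proposal is sound and clean: augmenting along an augmenting path indeed only enlarges the set of covered vertices, so whenever a tropical matching exists some \emph{maximum} matching is tropical, and \tmp collapses to the feasibility question ``does some maximum matching of $G$ cover every color?'' This is a correct and useful structural reduction.

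The genuine gap is in your step (b), which is exactly where the entire difficulty of the theorem lives; your treatment of it is a plan (``I would prove\ldots'') rather than a proof. The Gallai--Edmonds reduction you sketch hides a coupled, two-level choice that is not obviously captured by ``a single auxiliary matching'': which components of $G[D]$ are deficient is determined by the choice of a saturating matching of $A$ into the components (any such matching in the auxiliary bipartite graph is realizable by some maximum matching), and \emph{simultaneously} one exposed vertex must be placed inside each deficient component so that no color class confined to $D$ becomes entirely exposed. These choices interact globally: for instance, a color whose two vertices lie in two distinct singleton components of $G[D]$ fails precisely when both components end up deficient, which is governed by the $A$-matching and not by any local placement of exposed vertices; so feasibility is not a plain Hall condition between deficient components and $D$-confined colors, and the asserted equivalence is never established. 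Likewise, the claims that the greedy color-by-color rotation never needs to sacrifice an already-secured color (given a suitable phase ordering) and that a failed search for a ``safe flip'' certifies global infeasibility both require an exchange argument you do not supply; without it, the procedure as described could stall on a feasible instance or report infeasibility incorrectly. The $O(n_1 m_1)$ bookkeeping is plausible, but the correctness core of Cohen et al.'s result remains unproved in your write-up.
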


The next lemma establishes the connection between \bmp and \tmp. 

\begin{figure}[!ht]
		\begin{center}
			\includegraphics[width=.8\textwidth]{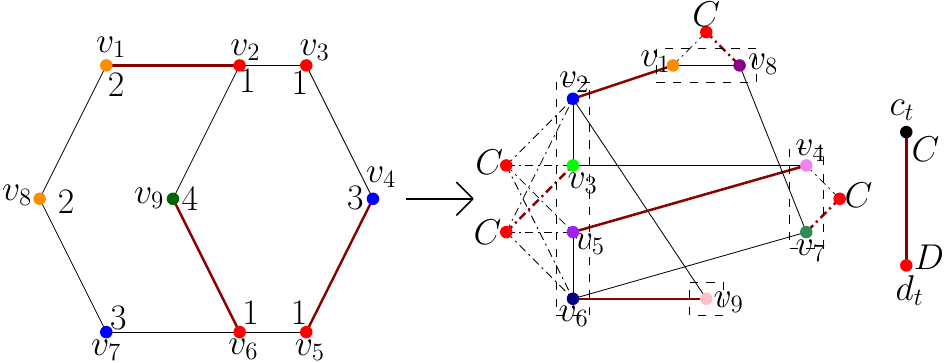}
			\caption{A sample reduction from Budgeted Matching with budget $[2,1,1,1]$ for colors $1,2,3,4$ respectively to Tropical Matching}
			\label{fig:tropical}
		\end{center}
	\end{figure}
	
\begin{lemma}\label{lemma:tropical}
		If \tmp can be solved in $T(n_1,m_1)$ time, \bmp can be solved in $T(\alpha n,\beta\omega n^2)+\gamma\omega n^2$ time for some constants $\alpha,\beta$ and $\gamma$.
\end{lemma}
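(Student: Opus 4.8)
The plan is to reduce \bmp to \tmp by converting a problem with arbitrary coverage requirements $r_1,\ldots,r_\omega$ into one with unit requirements per color. The key obstacle is encoding the multiplicity: in \tmp we only ask to cover \emph{one} vertex of each color, so to express ``cover at least $r_i$ vertices of color $i$'' we must introduce enough distinct colors to count coverage up to $r_i$. First I would, for each original color $i$, replace it by $r_i$ fresh colors $i_1,\ldots,i_{r_i}$ and build a gadget that forces any tropical matching covering all of $i_1,\ldots,i_{r_i}$ to correspond to covering at least $r_i$ of the color-$i$ vertices in the \bmp instance. Since $\sum_i r_i \le n$ (a matching covers at most $n$ vertices, and requirements beyond this are infeasible and can be detected immediately), the total number of new colors is at most $n$, which keeps the sizes polynomial.

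The technical heart is the gadget construction for a single color class. Suppose color $i$ has $n_i$ vertices $u_1,\ldots,u_{n_i}$ and requirement $r_i$. I would attach to each $u_j$ a small auxiliary structure whose purpose is twofold: (a) a matching edge incident on $u_j$ should be able to ``activate'' exactly one of the new unit-colors among $i_1,\ldots,i_{r_i}$, and (b) if $u_j$ is \emph{not} covered by the real part of the matching, the remaining unit-colors can still be satisfied by dummy edges inside auxiliary gadgets so that infeasibility is never forced incorrectly. Concretely, I plan to add $r_i$ auxiliary ``slot'' vertices colored $i_1,\ldots,i_{r_i}$, and connect each $u_j$ so that covering $u_j$ in the real matching frees up one slot to be matched cheaply (covering its unit color), while uncovered $u_j$'s leave their slots to be forced by a cost that the maximization objective will avoid. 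The cost accounting must be arranged so that \emph{maximizing} the tropical matching size (as in Theorem~\ref{thm:cohen}) corresponds to \emph{minimizing} the \bmp matching size; this sign flip is where I expect the most care to be needed.

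To make the correspondence exact I would work with the following dictionary: a \bmp matching $M$ covering at least $r_i$ vertices of each color $i$ yields a tropical matching $M'$ in $G'$ that covers all $\sum_i r_i + 1$ new colors (the extra color is a global padding color, analogous to the auxiliary color $C$ used in Lemma~\ref{bmp-cecp}, to let short matchings be completed) and whose size is a fixed affine function of $|M|$; conversely, a tropical matching covering all new colors projects back to a \bmp matching with the required coverage. Because \tmp returns a \emph{maximum}-sized matching, I would pad every real edge of $G$ so that using fewer real edges increases the room for the fixed-size auxiliary edges, making the maximum tropical matching correspond precisely to the minimum feasible \bmp matching. Verifying that every valid tropical matching decomposes into ``real part plus slot-filling part'' without cross interference, and that the objective translates linearly, is the bookkeeping I would carry out in detail.

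Finally I would bound the parameters of $G'$. The vertex set grows by the auxiliary slot and gadget vertices, at most $O(n)$ per color and $O(n)$ colors, giving $|V(G')| = O(n) = \alpha n$ for a constant $\alpha$ after noting $\sum_i r_i \le n$. The edge set is dominated by the connections between the $u_j$'s and their $O(r_i)$ slots, at most $O(\omega n^2)$ in total, i.e. $\beta \omega n^2$ edges, and the construction itself takes $\gamma \omega n^2$ time. Invoking Theorem~\ref{thm:cohen} on $G'$ then runs in $T(\alpha n, \beta\omega n^2)$ time, and adding the construction overhead gives the claimed $T(\alpha n,\beta\omega n^2)+\gamma\omega n^2$ bound. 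The main obstacle throughout is designing the gadget so that the min-to-max objective translation is exact and no spurious tropical matching can cheat the coverage count; everything else is routine size accounting.
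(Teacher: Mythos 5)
Your high-level plan (reduce arbitrary requirements to unit requirements and repair the min-vs-max mismatch) aims at the right target, but the central gadget is never actually constructed, and the natural implementation of what you describe fails. You want covering $u_j$ by a real edge to ``activate'' a slot vertex of color $i_\ell$ so that the slot can then be matched cheaply. Matchings cannot express such an implication: they only express conflicts. If a slot vertex is adjacent to $u_j$, then matching the slot \emph{blocks} $u_j$ from being covered by a real edge rather than being enabled by it. Consequently, with $r_i$ slot vertices of colors $i_1,\ldots,i_{r_i}$ adjacent to the color-$i$ vertices, a tropical matching can cover all of the new colors by matching each slot directly to some $u_j$ and using \emph{no} real edges at all; nothing forces the \bmp requirement. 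If you instead try to repair this by forcing every original vertex to be matched, the slots become absorbers, and $r_i$ absorbers per color enforce that at least $n_i-r_i$ (not at least $r_i$) color-$i$ vertices are matched by real edges --- the count is inverted. The same unresolved issue reappears in your ``padding'' step: you never exhibit a mechanism by which maximizing the tropical matching size minimizes the number of real edges, you only state that one should exist.

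The paper's proof resolves both difficulties at once with a different construction. Every original vertex receives its own \emph{unique} color (so any feasible tropical matching must match all of $V$); each color class $x$ gets $n_x-r_x$ absorber vertices of a common dummy color $C$, joined to all of $\mcc_x$; and one extra forced edge $(c_t,d_t)$ takes care of the colors $C$ and $D$. Since the absorbers can soak up at most $n_x-r_x$ vertices of $\mcc_x$, at least $r_x$ vertices of each color $x$ must be matched by real edges, which is exactly the \bmp requirement. Moreover, because every vertex of $V$ is forced to be matched, a solution with $k_2$ real edges and $k_1$ absorber edges satisfies $2k_2+k_1=n$, so the tropical matching has size $n-k_2+1$: maximizing its size is automatically the same as minimizing the number of real edges, with no padding gadget needed, and the correspondence $k \leftrightarrow n-k+1$ makes the optimality argument immediate. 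To salvage your outline you would need a matching gadget genuinely implementing ``activation,'' and none is provided; the absorber-plus-unique-colors idea is the missing ingredient.
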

\begin{proof}
	Suppose we would like to solve \bmp on a given instance $\I$ consisting of a vertex-colored graph $G=(V,E)$ and a set of colors $\{1,2,\ldots,{\omega}\}$. Let $n$ be the number of vertices in $G$. We construct a new instance $\I_t$ of \tmp consisting of a vertex-colored graph $G_t=(V_t,E_t)$ and a set of colors $\{1,2,\ldots,n,C,D\}$ as follows. 
	
	Let $\mcc_x$ be the set of vertices in $G$ of color $x$ and $n_x=|\mcc_x|$ for $1\le x\le \omega$. $V_t$ contains all the vertices in $V$ and for each color $1\le x\le \omega$, a set of $n_x-r_x$ vertices $V^x$. Additionally, $V_t$ contains two more auxiliary vertices $c_t$ and $d_t$. Thus, $V_t=V\cup (\cup_{x=1}^{\omega} V^x) \cup \{c_t,d_t\}$. $E_t$ contains all the edges in $E$ and for each color $1\le x\le \omega$, a set of $(n_x-r_x)\times n_x$ edges $E^x=\{(u,v)\mid u \in V^x \text{ and } v\in \mcc_x\}$. Additionally, the edge $(c_t,d_t)$ is included in $E_t$. Thus, $E_t=E\cup (\cup_{x=1}^{\omega} E^x) \cup \{(c_t,d_t)\}$. Each vertex $u\in V$ in $\I_t$ gets a unique color $j$  for some $1\le j\le n$. Colors of $c_t$ and $d_t$ are $C$ and $D$, respectively. Finally, colors of all vertices in $\cup_{x=1}^{\omega} V^x$ are $C$. See Figure \ref{fig:tropical} for an example construction.  Note that $|V_t|=O(n)$ and $|E_t|=O(\omega n^2)$. 
	
	Next, we show that $\I$ has a solution to \bmp with $k$ edges if and only if $\I_t$ has a solution to \tmp with $n-k+1$ edges. 	First, assume that there is a matching $M$  of size $k$ in $G$ which is a solution to \bmp. We construct a new matching $M_t$ for $G_t$. We include all the edges in $M$ and $(c_t,d_t)$ in $M_t$. For each $1\le x\le \omega$, let $U_x$ be the set of vertices in $\mcc_x$ which are not matched by $M$. We also include a matching between $U_x$ and $V^x$ of size $|U_x|$ in $M_t$. Note that such a matching always exists, as $|U_x|\le n_x-r_x$ by the definition of $M$. Now, we argue that $M_t$ is a valid solution to \tmp of size $(n-k)+1$. First, note that $M_t$ is a matching in $G_t$ which matches all the vertices in $V$. Thus, there is a matched vertex of color $j$ for each $1\le j\le n$. Now, as $(c_t,d_t)$ is also in $M_t$, there are matched vertices of colors $C$ and $D$ as well. Thus, $M_t$ is a feasible solution to \tmp. Note that the size of $\cup_{x=1}^{\omega} U_x$ is exactly $n-2k$, as $k$ edges in $M$ match exactly $2k$ vertices in $V$. Thus, the size of $M_t$ is $k+1+(n-2k)=(n-k)+1$. 
	
	Now, suppose there is a matching $M_t$  of size $(n-k)+1$ in $G_t$ which is a solution to \tmp. We construct a matching $M$ for $G$ starting from $M_t$. Indeed, $M$ is the subset of edges in $M_t$ which are contained in $E$. We argue that $M$ is a feasible solution to \bmp. Note that $M_t$ must match all the vertices in $V$, as each such vertex has a unique color which does not appear in any other vertex. Consider any color $x$ for $1\le x\le \omega$. The edges in $E^x$ can match at most $n_x-r_x$ vertices of $\mcc_x$, as $|V^x|=n_x-r_x$. Thus, there exist at least $r_x$ edges in $M_t\cap E$ which match the vertices in $\mcc_x$ not matched by the edges in $M_t\cap E^x$. It follows that $M$ matches at least $r_x$ vertices of $\mcc_x$ for each $1\le x\le \omega$, and hence it is a feasible solution to \bmp. Next, we show that the size of $M$ is exactly $k$. Let $k_1$ and $k_2$ be the number of edges of $M_t$ which are in $\cup_{x=1}^{\omega} E^x$ and $E$, respectively. Note that $(c_t,d_t)$ must be included in $M_t$, as otherwise there will be no vertex of color $D$ in $M_t$. It follows that $k_1+k_2=n-k$, or $n=k_1+k_2+k$. Now, the $k_1$ edges of $M_t$ in $\cup_{x=1}^{\omega} E^x$ match $k_1$ vertices of $V$, and the $k_2$ edges of $M_t$ in $E$ match exactly $2k_2$ vertices of $V$. As these $k_1+k_2$ edges match all the vertices of $V$, $k_1+2k_2=n=k_1+k_2+k$. It follows that $k_2=k$ making the size of $M$ exactly $k$. 
	
	We solve \tmp on $\I_t$ to obtain a matching $M_t$ of the maximum size, say $s$. We return the matching $M=M_t\cap E$ as the solution to $\I$. We claim that $M$ is a solution to \bmp of the minimum size. Suppose it is not. From the above discussion, we know that the size of $M$ constructed this way, is $n-s+1$. Suppose there is a solution $M'$ to \bmp of size $z < n-s+1$. Then, by the above discussion, there is a solution to \tmp of size $n-z+1 > n+1-n+s-1=s$. But, this is a contradiction to the assumption that $M_t$ is a maximum size solution. 
	
	Finally, \tmp can be solved on $\I_t$ in $T(\alpha n,\beta\omega n^2)$ time for some constants $\alpha,\beta$, and construction of $G_t$ can be done in $\gamma\omega n^2$ time for some constant $\gamma$. Hence, \bmp can be solved on $\I$ in time $T(\alpha n,\beta\omega n^2)+\gamma\omega n^2$. 
 \end{proof}

From Theorem \ref{thm:cohen} and Lemma \ref{lemma:tropical}, we obtain the following theorem.

\begin{theorem}\label{thm:bmp}
	 \bmp can be solved in $O(\omega n^3)$ time.  
\end{theorem}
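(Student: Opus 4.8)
The plan is to obtain Theorem~\ref{thm:bmp} by simply composing the two results already established, namely the reduction of Lemma~\ref{lemma:tropical} and the algorithm of Cohen et al.\ for \tmp recorded in Theorem~\ref{thm:cohen}. First I would invoke Lemma~\ref{lemma:tropical}, which tells us that solving \bmp on an $n$-vertex instance reduces to solving \tmp on an instance $\I_t$ whose graph $G_t$ has $n_1 = \alpha n = \Theta(n)$ vertices and $m_1 = \beta\omega n^2 = \Theta(\omega n^2)$ edges, incurring an additive $\gamma\omega n^2$ overhead for constructing $G_t$.

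The second step is to plug the bound of Theorem~\ref{thm:cohen} into this reduction. Since \tmp on a graph with $n_1$ vertices and $m_1$ edges runs in $T(n_1, m_1) = O(n_1 m_1)$ time, substituting $n_1 = \alpha n$ and $m_1 = \beta\omega n^2$ yields $O(\alpha n \cdot \beta\omega n^2) = O(\omega n^3)$ time for the call to the \tmp solver. Adding the construction cost $\gamma\omega n^2$, the total running time is $O(\omega n^3) + O(\omega n^2) = O(\omega n^3)$, since the $\omega n^2$ term is dominated by the $\omega n^3$ term. This establishes the claimed bound.

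Because every ingredient is already proved, there is no genuine obstacle here: the whole argument amounts to substituting one running-time expression into another and checking that the construction overhead is asymptotically negligible. The only point warranting a moment's care is confirming that the reduction's edge count is indeed $O(\omega n^2)$ rather than something larger. This is exactly the bound $|E_t| = O(\omega n^2)$ verified in the proof of Lemma~\ref{lemma:tropical}, where the dominant contribution comes from the $(n_x - r_x)\times n_x$ bipartite gadgets summed over the $\omega$ colors; it is this quadratic-in-$n$, linear-in-$\omega$ edge count that propagates through the $O(n_1 m_1)$ bound to produce the final $O(\omega n^3)$ running time.
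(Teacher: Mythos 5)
Your proof is correct and matches the paper exactly: the paper also obtains Theorem~\ref{thm:bmp} by composing Lemma~\ref{lemma:tropical} with Theorem~\ref{thm:cohen}, substituting $T(\alpha n,\beta\omega n^2)=O(\alpha n\cdot\beta\omega n^2)=O(\omega n^3)$ and noting that the $\gamma\omega n^2$ construction overhead is dominated. Nothing further is needed.
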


Since, in Lemma~\ref{bmp-cecp}, we already established the relation between \bmp and \cecp, 
it completes the proof of Theorem~\ref{cecp-poly}.

\bibliographystyle{abbrv}
\bibliography{Arxiv.bib}

\end{document}